\def\qedbox#1#2{\vbox{\hrule height.2pt
  \hbox{\vrule width.2pt height#2pt \kern#1pt \vrule width.2pt}
  \hrule height.2pt}}
\def\qed{\hfill \quad\qedbox46\newline\smallbreak}
\def\s#1{\mbox{\boldmath $#1$}}
\def\ceil#1{\lceil #1 \rceil}
\def\+{\!+\!}
\def\-{\!-\!}
\def\itbf#1{\textit{\textbf{#1}}}
\def\match{\approx}
\def\cP{\mathcal{P}}
\def\bproc{{\bf procedure\ }}
\def\bfunc{{\bf function\ }}
\def\bfor{{\bf for\ }}
\def\bto{{\bf to\ }}
\def\bwhile{{\bf while\ }}
\def\band{{\bf and\ }}
\def\bdo{{\bf do\ }}
\def\bif{{\bf if\ }}
\def\bthen{{\bf then\ }}
\def\bnot{{\bf not\ }}
\def\la{\leftarrow}
\def\qq{\qquad}
\def\com#1{\hspace{24pt}{\bf $\triangleright$}\hspace{6pt}{\sl #1}}
\def\pref(#1,#2){$#1$ is a prefix of $#2$}
\def\suff(#1,#2){$#1$ is a suffix of $#2$}
\def\reg(#1,#2){$#2$ is $#1$-regular}
\def\notreg(#1,#2){$#2$ is not $#1$-regular}
\def\top{\tt{top}}
\def\pop{\tt{pop}}
\def\true{\tt{true}}
\def\false{\tt{false}}
\def\UPDATE\_F{\tt{UPDATE\_F}}
\def\LEAST{\tt{LEAST}}
\def\MERGE{\tt{MERGE}}
\newif\ifProofs
\newif\ifRev
\begin{document}

\pagestyle{headings}

\title {Inferring an Indeterminate String \\ from a Prefix Graph}
\titlerunning{\itshape{Inferring an Indeterminate String from a Prefix
Graph}}
 
\author{
Ali Alatabbi\inst{1}
\and
M.\ Sohel Rahman\thanks{Supported by an ACU Titular Fellowship.}\inst{2}
\and
W.\ F.\ Smyth\thanks{Supported in part by the Natural Sciences \& Engineering
Research Council of Canada.}\inst{3,4}}

\authorrunning{\itshape{Alatabbi, Rahman and Smyth.}}

\institute{Department of Informatics, King's College London\\
\email{ali.alatabbi@kcl.ac.uk}
\and Department of Computer Science \& Engineering\\
Bangladesh University of Engineering \& Science\\
\email{msrahman@cse.buet.ac.bd}
\and Algorithms Research Group, Department of Computing \& Software\\
McMaster University\\
\and School of Engineering \& Information Technology, \\
Murdoch University\\
\email{smyth@mcmaster.ca}}

\maketitle

\begin{abstract}
An \itbf{indeterminate string} (or, more simply, just a \itbf{string})
$\s{x} = \s{x}[1..n]$ on an alphabet $\Sigma$
is a sequence of nonempty subsets of $\Sigma$.
We say that $\s{x}[i_1]$ and $\s{x}[i_2]$ \itbf{match}
(written $\s{x}[i_1] \match \s{x}[i_2]$) if and only if
$\s{x}[i_1] \cap \s{x}[i_2] \ne \emptyset$.
A \itbf{feasible array} is an array $\s{y} = \s{y}[1..n]$ of integers
such that $\s{y}[1] = n$ and for every $i \in 2..n$,
$\s{y}[i] \in 0..n\- i\+ 1$.
A \itbf{prefix table} of a string $\s{x}$ is an array $\s{\pi} = \s{\pi}[1..n]$
of integers such that, for every $i \in 1..n$,
$\s{\pi}[i] = j$ if and only if $\s{x}[i..i\+ j\- 1]$
is the longest substring at position $i$ of \s{x} that matches a prefix of \s{x}.
It is known from \cite{CRSW13} that every feasible array is a prefix table of
some indetermintate string.
A \itbf{prefix graph} $\mathcal{P} = \mathcal{P}_{\s{y}}$
is a labelled simple graph whose structure is determined
by a feasible array \s{y}.
In this paper we show, given a feasible array \s{y},
how to use $\mathcal{P}_{\s{y}}$ to construct
a lexicographically least indeterminate string
on a minimum alphabet whose prefix table $\s{\pi} = \s{y}$.
\end{abstract}

\section{Introduction}
\label{sect-intro}
In the extensive literature of stringology/combinatorics on words,
a ``string'' or ``word'' has usually been defined as a sequence
of individual elements of a distinguished set $\Sigma$ called an ``alphabet''.
Nevertheless, going back as far as the groundbreaking paper of
Fischer \& Paterson \cite{FP74}, more general sequences,
defined instead on {\it subsets} of $\Sigma$, have also been considered.
The more constrained model introduced in \cite{FP74} restricts
entries in a string to be either elements of $\Sigma$
(subsets of size 1) or $\Sigma$ itself (subsets of size $\sigma = |\Sigma|$);
these have been studied in recent years as
``strings with don't cares'' \cite{IMMP03}, also
``strings with holes'' or ``partial words'' \cite{BS08}.
The unconstrained model, which allows arbitrary nonempty subsets of $\Sigma$,
has also attracted significant attention,
often because of applications in bioinformatics:
such strings have variously been called ``generalized'' \cite{A87},
``indeterminate'' \cite{HS03}, or ``degenerate'' \cite{IMR08}.

In this paper we study strings
in their full generality,
hence the following definitions:

\begin{definition}
\label{defn-string}
Suppose a set $\Sigma$ of symbols (called the \itbf{alphabet}) is given.
A \itbf{string} \s{x} on $\Sigma$ of \itbf{length} $n = |\s{x}|$ is a sequence
of $n \ge 0$ nonempty finite subsets of $\Sigma$, called \itbf{letters};
we represent \s{x} as an array $\s{x}[1..n]$.
If $n = 0$, \s{x} is called the \itbf{empty string}
and denoted by $\s{\varepsilon}$;
if for every $i \in 1..n$, $\s{x}[i]$ is a subset of $\Sigma$ of size 1,
\s{x} is said to be a \itbf{regular string}.
\end{definition}

\begin{definition}
\label{defn-match}
Suppose we are given two strings \s{x} and \s{y} and integers $i \in 1..|\s{x}|$,
$j \in 1..|\s{y}|$. We say that $\s{x}[i]$ and $\s{y}[j]$ \itbf{match}
(written $\s{x}[i] \match \s{y}[j]$) if and only if $\s{x}[i] \cap \s{y}[j] \ne \emptyset$. Then \s{x} and \s{y} \itbf{match} ($\s{x} \match \s{y}$)
if and only if $|\s{x}| = |\s{y}|$
and $\s{x}[i] \match \s{y}[i]$ for every $i \in 1..|\s{x}|$.
\end{definition}
Note that matching is not necessarily transitive:
$a \match \{a,b\} \match b$, but $a \not\match b$.

\begin{definition}
\label{defn-prefix}
The \itbf{prefix table} (also \itbf{prefix array})\footnote{
We prefer ``table'' because of the possible confusion with ``suffix array'',
a completely different data structure.}
of a string $\s{x} = \s{x}[1..n]$ is the integer array
$\s{\pi}_{\s{x}} = \s{\pi}_{\s{x}}[1..n]$
such that for every $i \in 1..n$, $\s{\pi}_{\s{x}}[i]$ is the
length of the longest prefix of $\s{x}[i..n]$ that matches a prefix of $\s{x}$.
Thus for every prefix table $\s{\pi}_{\s{x}}$, $\s{\pi}_{\s{x}}[1] = n$.
When there is no ambiguity, we write $\s{\pi} = \s{\pi}_{\s{x}}$.
\end{definition}
The prefix table is an important data structure for strings:
it identifies all the borders, hence all the periods,
of every prefix of \s{x} \cite{CRSW13}.
It was originally introduced to facilitate the computation of
repetitions in regular strings \cite{ML84}, see also \cite{S03};
and for regular strings, prefix table and border array are equivalent,
since each can be computed from the other in linear time \cite{BKS13}.
For general strings, the prefix table can be computed in compressed form in $O(n^2)$ time
using $\Theta(n/\sigma)$ bytes of storage space \cite{SW08},
where $\sigma = |\Sigma|$.
Two examples follow, adapted from \cite{CRSW13}:
\begin{equation}
\label{pi1}
\begin{array}{rcccccccc}
& \scriptstyle 1 & \scriptstyle 2 & \scriptstyle 3 & \scriptstyle 4 & \scriptstyle 5 & \scriptstyle  6 & \scriptstyle 7 & \scriptstyle 8 \\
\s{x_1} = & a & c & a & g & a & c & a & t \\
\s{\pi_1} = & 8 & 0 & 1 & 0 & 3 & 0 & 1 & 0
\end{array}
\end{equation}
\begin{equation}
\label{pi2}
\begin{array}{rcccccccc}
& \scriptstyle 1 & \scriptstyle 2 & \scriptstyle 3 & \scriptstyle 4 & \scriptstyle 5 & \scriptstyle  6 & \scriptstyle 7 & \scriptstyle 8 \\
\s{x_2} = & \{a,c\} & \{g,t\} & \{a,g\} & \{a,c,g\} & g & c & \{a,t\} & a \\
\s{\pi_2} = & 8 & 0 & 4 & 2 & 0 & 3 & 1 & 1
\end{array}
\end{equation}

Since clearly every position $i \in 2..n$ in a prefix table \s{\pi}
must satisfy $0 \le \s{\pi}[i] \le n\- i\+ 1$, the following definition is a natural one:
\begin{quote}
An array $\s{y} = \s{y}[1..n]$ of integers is said to be a \itbf{feasible array}
if and only if $\s{y}[1] = n$ and for every $i \in 2..n$,
$\s{y}[i] \in 0..n\- i\+ 1$.
\end{quote}

An immediate consequence of Definition~\ref{defn-prefix} is the following:
\begin{lemma}[\cite{CRSW13}]
\label{lemm-easy}
Let $\s{x} = \s{x}[1..n]$ be a string.
An integer array $\s{y} = \s{y}[1..n]$ is the prefix table
of $\s{x}$ if and only if for each position $i \in 1..n$,
the following two conditions hold:
\begin{itemize}
\item[(a)] $\s{x}\big[1..\s{y}[i]\big] \match \s{x}\big[i..i+ \s{y}[i]-1\big]$;
\item[(b)] if $i + \s{y}[i] \le n$, then $\s{x}\big[y[i]+1\big] \not\match \s{x}\big[i +\s{y}[i]\big]$.
\end{itemize}
\end{lemma}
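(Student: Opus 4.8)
The plan is to prove both directions of the equivalence by unwinding Definition~\ref{defn-prefix}, exploiting the fact that the match relation between two prefixes is, by Definition~\ref{defn-match}, a purely position-wise condition. The crucial structural observation I would record first is that this makes matching of prefixes \emph{prefix-closed} and \emph{incrementally extendable}: if $\s{x}[1..j] \match \s{x}[i..i\+ j\- 1]$ holds, then so does $\s{x}[1..j'] \match \s{x}[i..i\+ j'\- 1]$ for every $j' \le j$; and conversely a length-$j$ match extends to a length-$(j\+ 1)$ match precisely when the single new pair $\s{x}[j\+ 1]$, $\s{x}[i\+ j]$ matches. Despite matching not being transitive, this monotone behaviour \emph{along a fixed alignment} is exactly what lets a local test at one position certify global maximality.

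For the forward direction, I would assume $\s{y} = \s{\pi}_{\s{x}}$ and fix a position $i$. By Definition~\ref{defn-prefix}, $\s{y}[i]$ is the length of the longest prefix of $\s{x}[i..n]$ matching a prefix of \s{x}, which immediately yields (a). For (b), suppose $i\+ \s{y}[i] \le n$, so that position $i\+ \s{y}[i]$ and prefix position $\s{y}[i]\+ 1$ both exist. Were $\s{x}[\s{y}[i]\+ 1] \match \s{x}[i\+ \s{y}[i]]$ to hold, the incremental-extension observation would upgrade the length-$\s{y}[i]$ match at $i$ to a length-$(\s{y}[i]\+ 1)$ match, contradicting maximality of $\s{y}[i]$. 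Hence (b) follows.

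For the reverse direction, I would assume (a) and (b) at every $i$ and show $\s{y}[i] = \s{\pi}_{\s{x}}[i]$. Writing $p = \s{\pi}_{\s{x}}[i]$, condition (a) exhibits a matching prefix of length $\s{y}[i]$, so $\s{y}[i] \le p$. To exclude $\s{y}[i] < p$, note that $p \le n\- i\+ 1$ forces $i\+ \s{y}[i] \le n$, so (b) applies and gives $\s{x}[\s{y}[i]\+ 1] \not\match \s{x}[i\+ \s{y}[i]]$. But since $p$ is a genuine match length, reading $\s{x}[1..p] \match \s{x}[i..i\+ p\- 1]$ position-wise yields $\s{x}[k] \match \s{x}[i\+ k\- 1]$ for every $k \le p$; taking $k = \s{y}[i]\+ 1 \le p$ contradicts (b). Thus $\s{y}[i] = p$ for all $i$, i.e.\ $\s{y}$ is the prefix table of \s{x}.

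The only real subtlety — and the step I would be most careful to state explicitly — is the non-transitivity caveat flagged just after Definition~\ref{defn-match}. It does \emph{not} interfere here because throughout we compare symbols solely along the single fixed alignment of $\s{x}[i..]$ against $\s{x}[1..]$; at no point do we chain one match through another. Making that observation precise at the outset is what keeps both directions routine, turning the global ``longest'' requirement of the prefix table into the local non-match test (b).
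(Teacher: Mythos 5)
Your proof is correct and takes the same route the paper intends: the paper gives no proof at all, stating the lemma (with a citation to \cite{CRSW13}) as ``an immediate consequence of Definition~\ref{defn-prefix}'', and your two directions are precisely the routine unwinding that justifies that claim, with the single-fixed-alignment observation correctly identifying why the non-transitivity of $\match$ never intervenes. One elliptical step worth tightening: in the reverse direction, ``$p \le n\- i\+ 1$ forces $i\+ \s{y}[i] \le n$'' is valid only together with the standing contradiction hypothesis $\s{y}[i] < p$ (giving $\s{y}[i] \le p\- 1 \le n\- i$), which you are in fact assuming, so the argument stands as written.
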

Then the following fundamental result establishes the important connection
between strings and feasible arrays:

\begin{lemma}[\cite{CRSW13}]
\label{lemm-feas}
Every feasible array is the prefix table of some string.
\end{lemma}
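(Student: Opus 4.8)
The plan is to give a direct constructive proof: from the feasible array \s{y} I will build a string \s{x} and then invoke Lemma~\ref{lemm-easy} to certify that $\s{\pi}_{\s{x}} = \s{y}$. The idea is to read off from \s{y} exactly which pairs of positions are \emph{required} to match and which are \emph{required} to mismatch, realize those requirements with a tailored alphabet, and leave every unconstrained pair disjoint.

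First I would translate conditions (a) and (b) of Lemma~\ref{lemm-easy} into relations on position pairs. For $1 \le p < q \le n$, condition (a) forces $\s{x}[p] \match \s{x}[q]$ exactly when the pair arises as $(k, i+k-1)$ for some $i$ with $k \in 1..\s{y}[i]$; writing $i = q-p+1$ and $k = p$ this becomes the single inequality $p \le \s{y}[q-p+1]$, and I will call such pairs \emph{match pairs}. Similarly, condition (b) forces $\s{x}[p] \not\match \s{x}[q]$ exactly when $(p,q) = (\s{y}[i]+1, i+\s{y}[i])$ for some $i$ with $i + \s{y}[i] \le n$; writing again $i = q-p+1$ this becomes $\s{y}[q-p+1] = p-1$, and I will call such pairs \emph{mismatch pairs}. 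The index $i=1$, where $\s{y}[1]=n$, contributes only trivial self-matches and no mismatch, so it may be ignored.

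The construction is then immediate. Introduce a distinct symbol $s_{\{p,q\}}$ for every match pair $\{p,q\}$, together with a private symbol $u_i$ for each position $i$, and set $\s{x}[i] = \{u_i\} \cup \{ s_{\{p,q\}} : \{p,q\} \text{ is a match pair and } i \in \{p,q\}\}$. Each letter is nonempty, and for $p \ne q$ the sets $\s{x}[p]$ and $\s{x}[q]$ share a symbol if and only if $\{p,q\}$ is a match pair, the private symbols guaranteeing that no unintended matches occur. Hence condition (a) holds by construction, and condition (b) holds provided every mismatch pair is disjoint, i.e. provided no pair is simultaneously a match pair and a mismatch pair.

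That last point is the only real obstacle, and it reduces to a one-line check: a match pair satisfies $p \le \s{y}[q-p+1]$ while a mismatch pair satisfies $\s{y}[q-p+1] = p-1$, and these cannot hold together since they would give $p \le p-1$. Thus the match and mismatch requirements are always consistent, every mismatch pair is indeed disjoint in \s{x}, and Lemma~\ref{lemm-easy} yields $\s{\pi}_{\s{x}} = \s{y}$. I expect the bookkeeping in the translation of (a) and (b) to be the most error-prone part, in particular the observation that the smaller index of any forced pair is always the prefix index $k$, so that the representation $i = q-p+1$ is unambiguous; but this is routine once the two relations are written down.
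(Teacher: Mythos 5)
Your proof is correct and is essentially the paper's own argument rendered in pair/symbol language: your symbols $s_{\{p,q\}}$ are exactly the positive edges of the graph the paper constructs (each letter $\s{x}[i]$ being the set of edges incident with vertex $i$), and your private symbols $u_i$ play the role of the paper's loop symbols at isolated vertices, merely applied uniformly to all positions. Your one-line consistency check ($p \le \s{y}[q-p+1]$ for a match pair versus $\s{y}[q-p+1] = p-1$ for a mismatch pair) is precisely the reason the paper's required non-matching pair can never be an edge, so the two verifications via Lemma~\ref{lemm-easy} coincide.
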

\ifProofs
 \begin{proof}
 Consider an undirected graph $\mathcal{P} = (V, E)$ whose vertex set $V$ is the set of positions $1..n$ in a given feasible
 array $\s{y}$.  The edge set $E$ consists of the 2-element subsets $(h, k)$ such that
 \begin{equation}
\label{eq}
 h \in 1..\s{y}[i]; \ k = i+h-1
 \end{equation}
for every $i \in 2..n$.
We then define \s{x} as follows:
for each non-isolated vertex $i$,
let $\s{x}[i]$ be the set of edges incident with $i$;
for each isolated vertex $i$, let $\s{x}[i]$ be the loop $\{i, i\}$.
Let $\Sigma = E \cup L$ where $L$ is the set of loops.
We claim that $\s{y}$ is the prefix table of $\s{x} = \s{x}[1..n]$.

To see this, note that for an index $i$ such that $\s{y}[i]>0$,
Lemma \ref{lemm-easy}(a) is satisfied by construction.
Then suppose that for some $\s{y}[i] > 0$ and $i\+ \s{y}[i] \le n$,
$\s{x}\big[y[i]\+ 1\big] \match \s{x}\big[i\+ \s{y}[i]\big]$.
But this contradicts Lemma~\ref{lemm-easy}(b),
and so $$\s{x}\big[y[i]\+ 1\big] \not\match \s{x}\big[i\+ \s{y}[i]\big].$$

In case $\s{y}[i] = 0$, Lemma \ref{lemm-easy}(a) is satisfied vacuously.
Moreover, $i$ is isolated and thus $\s{x}[i] = \{i, i\}$,
which does not match $\s{x}[1]$; consequently,
Lemma  \ref{lemm-easy}(b) is again satisfied.
Therefore, $\s{y}$
 coincides with the prefix table of $\s{x}$, which is a string over the set $\Sigma'$ of subsets of $\Sigma$. \qed
\end{proof}
\fi

In view of this lemma, we say that a feasible array
is \itbf{regular} if it is the prefix array of a regular string.
We are now able to state the goal of this paper as follows:
for a given feasible array $\s{y} = \s{y}[1..n]$,
not necessarily regular,
construct a string \s{x} on a minimum alphabet
whose prefix table $\s{\pi}_{\s{x}} = \s{y}$ --- the ``reverse engineering''
problem for the prefix table in its full generality.
In fact, we do somewhat more: we construct a
lexicographically least such string,
in a sense to be defined in Section~\ref{sect-prelim}.

The first reverse engineering problem in stringology was stated and solved
in \cite{FLRS99,FGLR02}, where an algorithm was described
to compute a lexicographically least regular string whose border array was
a given integer array --- or to return the result that no such regular string exists.
Many other similar constructions have since been published,
related to other stringological data structures but always specific to
regular strings (see \cite{BIST03,DLL05,FS06,MNRR13}, among others).
\cite{NRR12} was the first paper to consider the more general problem
of inferring an indeterminate string from a given data structure
(specifically, border array, suffix array and LCP array).
Although solving such problems does not yield immediate applications,
nevertheless solutions provide a deeper understanding of the combinatorial many-many relationship between
strings and the various data structures developed from them
(see for example \cite{MSM99},
where canonical strings corresponding to given border arrays are identified
and efficiently generated for use as test data).

For prefix tables and regular strings,
the reverse engineering problem was solved in \cite{CCR09},
where a linear-time algorithm was described to return
a lexicographically least regular string \s{x} whose
prefix table is the given feasible array \s{y}, or an error message if no
such \s{x} exists.
A recent paper \cite{BBD14} sketches two algorithms
to compute an indeterminate string \s{x} on a minimum alphabet
(not necessarily lexicographically least) corresponding to
a given feasible array \s{y},
but the algorithms are theoretical in nature:
one requires the determination of the chromatic number of a certain graph,
an NP-hard problem,
while the other depends on somehow identifying the minimum
``induced positive edge cover'' of a graph.
However, \cite{BBD14} proves an important result that we use
below to bound the complexity of our algorithm:
that the minimum alphabet size $\sigma$ of a string corresponding
to a given feasible array of length $n$ is at most $n\+ \sqrt{n}$.
In this paper
we use graph-theoretic methods developed from \cite{CRSW13}
to compute a lexicographically least string,
regular or not, corresponding to the given \s{y},
in time $O(\sigma n^2)$.


Section~\ref{sect-prelim} of this paper provides background material
for an understanding of our algorithm;
Section~\ref{sect-alg} presents the algorithm itself;
Section~\ref{sect-future} briefly discusses these results
and suggests future work.

\section{Preliminaries}
\label{sect-prelim}
Following \cite{CRSW13}, for a given feasible array $\s{y} = \s{y}[1..n]$,
we define a corresponding graph $\cP = \cP_{\s{y}}$,
on which our algorithm will be based:

\ifRev
\begin{figure}[htbp]
\label{fig-assign}
  \begin{minipage}{0.5\linewidth}
  \centering
  \includegraphics*[viewport = 1.5cm 12cm 9cm 19cm,scale=0.7]{ex-1-p}
  \caption{$\mathcal{P}_{\s{y_1}}^+$ for $\s{y_1} = 80103010$}\label{Graph-ex-1-p}
  \end{minipage}
  \hfill
  \begin{minipage}{0.5\linewidth}
  \centering
  \includegraphics*[viewport = 1.5cm 12cm 9cm 19cm,scale=0.7]{ex-1-n}\\
  \caption{$\mathcal{P}_{\s{y_1}}^-$ for $\s{y_1} = 80103010$}\label{Graph-ex-1-n}
  \end{minipage}
\end{figure}
\fi

\begin{definition}
\label{defn-P}
Let $\mathcal{P} = (V,E)$ be a labelled graph with vertex set $V =
\{1,2,\ldots,n\}$ consisting of positions in a given feasible array
$\s{y}.$ In $\mathcal{P}$ we define, for $i \in 2..n$, two kinds of edge
(compare Lemma~\ref{lemm-easy}):
\begin{itemize}
\item[(a)]
for every $h \in 1..\s{y}[i]$, $(h,i\+ h\- 1)$ is called a \itbf{positive edge};
\item[(b)]
$(1\+ \s{y}[i],i\+ \s{y}[i])$ is called a \itbf{negative edge}, provided $i\+ \s{y}[i] \le n.$
\end{itemize}
$E^+$ and $ E^-$ denote the sets of positive and negative edges, respectively.  We write
$E = E^+ \cup E^-$, $\mathcal{P}^+ = (V,E^+)$,
$\mathcal{P}^- = (V,E^-)$,
and we call $\mathcal{P}$ the \itbf{prefix graph}
$\cP_{\s{y}}$ of \s{y}.
If $\s{x}$ is a string having $\s{y}$
as its prefix table,
then we also refer to $\mathcal{P}$ as the \itbf{prefix graph}
$\cP_{\s{x}}$ of \s{x}.
\end{definition}
Observe that $E^+$ and $E^-$ are necessarily disjoint.
\ifRev
Figures \ref{Graph-ex-1-p}--\ref{Graph-ex-2-n} show the prefix graphs,
as given in \cite{CRSW13}, for the example strings (\ref{pi1}) and (\ref{pi2}).
Again, in
\fi
Figures \ref{Graph-ex-1-p-New}--\ref{Graph-ex-2-n-New} show the prefix graphs
\ifRev of two different indeterminate
\fi
for the example strings (\ref{piN1}) and (\ref{piN2}).

\begin{equation}
\label{piN1}
\begin{array}{rcccccccc}
& \scriptstyle 1 & \scriptstyle 2 & \scriptstyle 3 & \scriptstyle 4 & \scriptstyle 5 & \scriptstyle  6 & \scriptstyle 7 & \scriptstyle 8 \\
\s{x_3} = & \{a,b\} & \{a,c\} & c & \{a,b\} & b & c & \{a,c\} & b \\
\s{\pi_3} = & 8 & 2 & 0 & 1 & 4 & 0 & 1 & 1
\end{array}
\end{equation}

\begin{equation}
\label{piN2}
\begin{array}{rcccccccc}
& \scriptstyle 1 & \scriptstyle 2 & \scriptstyle 3 & \scriptstyle 4 & \scriptstyle 5 & \scriptstyle  6 & \scriptstyle 7 & \scriptstyle 8 \\
\s{x_4} = & \{a,b\} & \{a,c\} & \{a,d\} & \{c,e\} & a & \{b,e\} & c & d \\
\s{\pi_4} = & 8 & 2 & 4 & 0 & 1 & 3 & 0 & 0
\end{array}
\end{equation}

\ifRev
\begin{figure}[htbp]
  \begin{minipage}{0.5\linewidth}
  \centering
  \includegraphics*[viewport = 1.5cm 12cm 9cm 19cm,scale=0.7]{ex-2-p}\\
  \caption{$\mathcal{P}_{\s{y_2}}^+$ for $\s{y_2} = 80420311$}\label{Graph-ex-2-p}
  \end{minipage}
  \hfill
  \begin{minipage}{0.5\linewidth}
  \centering
  \includegraphics*[viewport = 1.5cm 12cm 9cm 19cm,scale=0.7]{ex-2-n}\\
  \caption{$\mathcal{P}_{\s{y_2}}^-$ for $\s{y_2} = 80420311$}\label{Graph-ex-2-n}
  \end{minipage}
\end{figure}
\fi

\begin{figure}[htbp]
\label{fig-assign-ex1}
  \begin{minipage}{0.5\linewidth}
  \centering
\begin{tikzpicture}	[scale=.7]
    \path (0,0)   node[circle,draw](1) {1};
    \path (-2,-1) node[circle,draw](2) {2};
    \path (-3,-3) node[circle,draw](3) {3};
    \path (-2,-5) node[circle,draw](4) {4};
    \path (0,-6) node[circle,draw](5) {5};
    \path (2,-5) node[circle,draw](6) {6};
    \path (3,-3) node[circle,draw](7) {7};
    \path (2,-1) node[circle,draw](8) {8};
    \draw[-] (1) -- (2);
    \draw[-] (1) -- (4);
    \draw[-] (1) -- (5);
    \draw[-] (1) -- (7);
    \draw[-] (1) -- (8);
    \draw[-] (2) -- (3);
    \draw[-] (2) -- (6);
    \draw[-] (3) -- (7);
    \draw[-] (4) -- (8);
\end{tikzpicture}
  \caption{$\mathcal{P}_{\s{y_3}}^+$ for $\s{y_3} = 82014011$}\label{Graph-ex-1-p-New}
  \end{minipage}
  \hfill
  \begin{minipage}{0.5\linewidth}
  \centering
  \begin{tikzpicture}	[scale=.7]
    \path (0,0)   node[circle,draw](1) {1};
    \path (-2,-1) node[circle,draw](2) {2};
    \path (-3,-3) node[circle,draw](3) {3};
    \path (-2,-5) node[circle,draw](4) {4};
    \path (0,-6) node[circle,draw](5) {5};
    \path (2,-5) node[circle,draw](6) {6};
    \path (3,-3) node[circle,draw](7) {7};
    \path (2,-1) node[circle,draw](8) {8};
    \draw[-] (1) -- (3);
    \draw[-] (1) -- (3);
    \draw[-] (2) -- (5);
    \draw[-] (2) -- (8);
    \draw[-] (3) -- (4);
    \draw[-] (6) -- (1);
    \draw[-] (8) -- (2);

\end{tikzpicture}
  \caption{$\mathcal{P}_{\s{y_3}}^-$ for $\s{y_3} = 82014011 $}\label{Graph-ex-1-n-New}
  \end{minipage}
\end{figure}

\begin{figure}[htbp]
\label{fig-assign-ex2}
  \begin{minipage}{0.5\linewidth}
  \centering
\begin{tikzpicture}	[scale=.7]
    \path (0,0)   node[circle,draw](1) {1};
    \path (-2,-1) node[circle,draw](2) {2};
    \path (-3,-3) node[circle,draw](3) {3};
    \path (-2,-5) node[circle,draw](4) {4};
    \path (0,-6) node[circle,draw](5) {5};
    \path (2,-5) node[circle,draw](6) {6};
    \path (3,-3) node[circle,draw](7) {7};
    \path (2,-1) node[circle,draw](8) {8};
    \draw[-] (1) -- (2);
    \draw[-] (1) -- (3);
    \draw[-] (1) -- (5);
    \draw[-] (1) -- (6);
    \draw[-] (2) -- (3);
    \draw[-] (2) -- (4);
    \draw[-] (2) -- (7);
    \draw[-] (3) -- (8);
    \draw[-] (3) -- (5);
    \draw[-] (4) -- (6);
    \draw[-] (5) -- (1);
\end{tikzpicture}
  \caption{$\mathcal{P}_{\s{y_4}}^+$ for $\s{y_4} = 82401300$}\label{Graph-ex-2-p-New}
  \end{minipage}
  \hfill
  \begin{minipage}{0.5\linewidth}
  \centering
  \begin{tikzpicture}	[scale=.7]
    \path (0,0)   node[circle,draw](1) {1};
    \path (-2,-1) node[circle,draw](2) {2};
    \path (-3,-3) node[circle,draw](3) {3};
    \path (-2,-5) node[circle,draw](4) {4};
    \path (0,-6) node[circle,draw](5) {5};
    \path (2,-5) node[circle,draw](6) {6};
    \path (3,-3) node[circle,draw](7) {7};
    \path (2,-1) node[circle,draw](8) {8};
    \draw[-] (1) -- (4);
    \draw[-] (1) -- (7);
    \draw[-] (1) -- (8);
    \draw[-] (2) -- (6);
    \draw[-] (3) -- (4);
    \draw[-] (5) -- (7);
\end{tikzpicture}
  \caption{$\mathcal{P}_{\s{y_4}}^-$ for $\s{y_4} = 82401300 $}\label{Graph-ex-2-n-New}
  \end{minipage}
\end{figure}

The following lemma will be useful for the analysis of our algorithm:

\begin{lemma}[\cite{CRSW13}]
\label{lemm-reg}
Let $\mathcal P_{\s{y}} = (V, E)$ be a prefix graph of a feasible array
\s{y}. Then \s{y} is regular if and only if every edge of $P^-_{\s{y}}$
joins two vertices in distinct connected components of $P^+_{\s{y}}$.
\end{lemma}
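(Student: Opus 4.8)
The plan is to exploit the single feature that distinguishes regular strings from arbitrary indeterminate ones: for a regular string every letter is a singleton, so $\s{x}[h]\match\s{x}[k]$ holds if and only if $\s{x}[h]=\s{x}[k]$, and in particular the matching relation becomes an \emph{equivalence} relation (contrast the non-transitivity noted immediately after Definition~\ref{defn-match}). Under this reading the positive edges of $\mathcal{P}_{\s{y}}$ are exactly the ``must be equal'' constraints arising from Lemma~\ref{lemm-easy}(a), while the negative edges are the ``must differ'' constraints arising from Lemma~\ref{lemm-easy}(b). The whole statement then reduces to the observation that a symbol assignment respecting both kinds of constraint exists if and only if no negative edge has both endpoints forced equal, i.e.\ lying in one connected component of $\mathcal{P}^+_{\s{y}}$.

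For the forward direction I would assume \s{y} is regular, fix a regular string \s{x} with $\s{\pi}_{\s{x}}=\s{y}$, and first show that any two positions in the same connected component of $\mathcal{P}^+_{\s{y}}$ carry the same symbol. Indeed, each positive edge $(h,k)$ comes from some $i$ with $h\in 1..\s{y}[i]$ and $k=i\+ h\- 1$, so Lemma~\ref{lemm-easy}(a) gives $\s{x}[h]\match\s{x}[k]$, and regularity upgrades this to $\s{x}[h]=\s{x}[k]$; propagating equality along any path in $\mathcal{P}^+_{\s{y}}$ shows the symbol is constant on each component. Now take any negative edge $(h,k)=(1\+\s{y}[i],i\+\s{y}[i])$. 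By Lemma~\ref{lemm-easy}(b) we have $\s{x}[h]\not\match\s{x}[k]$, hence $\s{x}[h]\ne\s{x}[k]$; were $h$ and $k$ in the same component of $\mathcal{P}^+_{\s{y}}$ they would share a symbol, a contradiction. So every negative edge joins distinct components.

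For the converse I would assume every negative edge joins two distinct components of $\mathcal{P}^+_{\s{y}}$ and construct a witnessing regular string directly: pick a distinct symbol for each connected component of $\mathcal{P}^+_{\s{y}}$ (isolated vertices forming singleton components), and let $\s{x}[i]$ be the symbol of the component containing $i$. This \s{x} is regular by construction. To check $\s{\pi}_{\s{x}}=\s{y}$ I would verify conditions (a) and (b) of Lemma~\ref{lemm-easy} for every $i$: condition (a) asks that $\s{x}[h]\match\s{x}[i\+ h\- 1]$ for $h\in 1..\s{y}[i]$, which holds because each such pair is a positive edge and hence lies in one component, giving equal symbols; condition (b) asks that $(1\+\s{y}[i],i\+\s{y}[i])$ fails to match when $i\+\s{y}[i]\le n$, which holds because that pair is a negative edge and by hypothesis its endpoints lie in different components, hence receive different symbols. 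The boundary case $i=1$ is automatic, since $\s{y}[1]=n$ makes (a) trivially true and (b) vacuous.

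The one point needing care — and the crux of the whole lemma — is the transitivity step in the forward direction: it is precisely the passage from ``matches'' to ``equals'' that is legitimate for regular strings but fails for indeterminate ones, so I would state explicitly why constancy on $\mathcal{P}^+$-components is justified here. The remaining verifications are routine once one notes that the positive and negative edges generated by the various $i$ enumerate exactly the constraints (a) and (b) of Lemma~\ref{lemm-easy}, with the disjointness of $E^+$ and $E^-$ ensuring that no single constraint is simultaneously ``equal'' and ``differ''.
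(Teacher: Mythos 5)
Your proof is correct and takes essentially the same approach as the paper: the forward direction propagates match-equals-equality along paths in $\mathcal{P}^+_{\s{y}}$ (exactly the paper's regularity argument) to contradict Lemma~\ref{lemm-easy}(b) on a negative edge, and the converse constructs the witness by assigning one distinct letter per connected component of $\mathcal{P}^+_{\s{y}}$ and checking both conditions of Lemma~\ref{lemm-easy}. No gaps to report.
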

\ifProofs
\begin{proof}
[if] Suppose that every negative edge
joins two vertices in distinct connected components of $\mathcal{P}^+$.
Form a regular string $\s{x}$ as follows: for each component $C$ of $\mathcal{P}^+$, assign a unique identical letter, say $\lambda_C$, to all positions $\s{x}[i]$ for which $i \in C$.  We show that $\s{y}$ is the prefix table of $\s{x}[1..n]$ and therefore that $\s{y}$ is regular.  Fix a value $i \in 2..n$.  For any $j$ such that $1 \le j \le \s{y}[i]$, $(j, j+i-1)$ is a positive edge.
 Thus $j$ and $j+i-1$ are in the same component of $\mathcal{P}^+$, and hence $\s{x}[j] = \s{x}[j+i-1]$.  We also note
that $(\s{y}[i]+1, \s{y}[i]+i)$ is a negative edge
(provided $\s{y}[i]\+ i \le n$).
If so, then by hypothesis $\s{y}[i]+1$ and $\s{y}[i]+i$ lie
in disjoint components of $\mathcal{P}^+$, so that,
by the uniqueness of $\lambda_C$,
$\s{x}\big[\s{y}[i]+1\big] \not\match \s{x}\big[\s{y}[i]+i\big]$.
This is precisely what we need in order to conclude
that $\s{y}$ is the prefix table of $\s{x}[1..n]$.
Since \s{x} is regular, so is \s{y}, as required. \\

\noindent
[only if]
Suppose that \s{y} is regular,
therefore the prefix table of a regular string $\s{x}$.
Now consider any negative edge $(p, q)$ of the prefix graph $\mathcal{P}$
of \s{y}, so that by Lemma~\ref{lemm-easy}(b)
$\s{x}[p] \not\approx \s{x}[q]$.
If $p$ and $q$ were in the same component of $\mathcal{P}^+$,
we would have by Lemma~\ref{lemm-easy}(a)
a path in $\mathcal{P}^+$ joining
$p$ to $q$ consisting of edges $(h, k)$ such that $\s{x}[h] \match \s{x}[k]$.
By the regularity of \s{y}, this requires $\s{x}[h] = \s{x}[k]$,
so that $\s{x}[p] = \s{x}[q]$, a contradiction.  \qed

 \end{proof}
\fi


So as to discuss the lexicographical ordering of strings on an ordered alphabet $\Sigma$,
we need first of all a definition of the order of two letters:
\begin{definition}
\label{letterorder}
Suppose two letters $\lambda$ and $\mu$ are given, where
$$\lambda = \{\lambda_1, \lambda_2, \ldots, \lambda_j\},\
\mu = \{\mu_1, \mu_2, \ldots, \mu_k\},$$
with $\lambda_h \in \Sigma$ for every $h \in 1..j$,
$\mu_h \in \Sigma$ for every $h\in 1..k$.
We assume without loss of generality that $j \le k$,
also that $\lambda_h < \lambda_{h+1}$ for every $h \in 1..j\- 1$ and
$\mu_h < \mu_{h+1}$ for every $h \in 1..k\- 1$.
Then $\lambda = \mu$ if and only if
$\lambda_h = \mu_h$ for every $h \in 1..k$ and $j=k$; while
$\lambda \prec \mu$ if and only if
\begin{itemize}
\item[(a)]
$\lambda_h = \mu_h$ for every $h \in 1..j < k$; or
\item[(b)]
$\lambda_h = \mu_h$ for every $h \in 1..h' < j$ {\bf and}
$\lambda_{h'+1} < \mu_{h'+1}$.
\end{itemize}
Otherwise, $\mu \prec \lambda$.
\end{definition}
Note that $(\lambda = \mu) \Rightarrow (\lambda \match \mu)$,
but that $\lambda \match \mu$ implies neither equality nor an ordering
of $\lambda$ and $\mu$.
We remark also that the definition of letter order given here is
not the only possible or useful one.
For example, it would require
$\{a,b,w,x,y,z\} \prec \{a,c\}$, thus arguably not
placing sufficient emphasis on economy of letter selection in the alphabet.

\begin{definition}
\label{wordorder}
Now suppose that two strings $\s{x_1} = \s{x_1}[1..n_1]$
and $\s{x_2} = \s{x_2}[1..n_2]$ on $\Sigma$ are given,
where without loss of generality we assume that $n_1 \le n_2$.
Then $\s{x_1} = \s{x_2}$ if and only if $\s{x_1}[h] = \s{x_2}[h]$
for every $h \in 1..n_2$ and $n_1=n_2$; while
$\s{x_1} \prec \s{x_2}$ if and only if
\begin{itemize}
\item[(a)]
$\s{x_1}[h] = \s{x_2}[h]$ for every $h \in 1..n_1 < n_2$; or
\item[(b)]
$\s{x_1}[h] = \s{x_2}[h]$ for every $h \in 1..h' < n_1$ \bf{and}
$\s{x_1}[h'\+ 1] \prec \s{x_2}[h'\+ 1]$.
\end{itemize}
Otherwise, $\s{x_2} \prec \s{x_1}$.
\end{definition}

To better illustrate the relation of strings defined and used in this paper
we present the following examples:
\begin{itemize}
  \item $\s{x_1}=\{a,c\}~\{g,t\}~a  \prec \s{x_2} =\{a,c\}~\{g,t\}~\{a,g\}$
  \item $\s{x_1}=a~\{g,t\}~\{a,c\}~\{a,c,g\}  \prec \s{x_2} =a~\{g,t\}~\{a,t\}~a$
  \item $\s{x_1}=a~\{a,c,g\}~\{a,c,g\}~\{a,t\} \prec \s{x_2} =~\{a,c\}~g~g~\{a,t\}$
\end{itemize}
where $a < c < g < t$.
\section{Algorithm RevEng}
\label{sect-alg}

\subsection{The Algorithm}
The basic strategy of Algorithm RevEng, that constructs a lexicographically least
string \s{x} (initially empty)
corresponding to a given feasible array $\s{y} = \s{y}[1..n]$, is
expressed by the main steps given below.
Initially the alphabet $\Sigma$ is empty ($\sigma = 0$), as are the sets $\s{x}[i]$,
$1 \le i \le n$.
\begin{description}
\item[(S1)]
Consider the edges $(i,j)$ of $E^+$ in increasing order of $ni\+ j$
in order to add a single letter to $\s{x}[i]$, $\s{x}[j]$, or both based on the following steps;
\item[(S2)]
if, by virtue of previous assignments, $\s{x}[i] \match \s{x}[j]$
(so neither is empty),
there is nothing to do --- $(i,j)$ can be skipped;
\item[(S3)]
otherwise, for the current $(i,j)$, determine a sequence
$$C = (\lambda_1,i_1),(\lambda_2,i_2),\ldots,(\lambda_r,i_r)$$
of all candidate assignments,
where for every $h \in 1..r$, $i_h = i$ (respectively, $j$)
if $\lambda_h \in \s{x}[j]$ (respectively, $\s{x}[i]$),
and $\lambda_1 < \lambda_2 < \cdots < \lambda_r$;
\item[(S4)]
for the current $h$, determine whether or not the assignment
$$\s{x}[i_h] \la \s{x}[i_h] \cup \{\lambda_h\}$$
is ``allowable'' (that is, compatible with
the neighbourhood of $i_h$ in $E^-$) --- if so,
then perform the assignment,
maintaining the elements of $\s{x}[i_h]$ in their natural order;
\item[(S5)]
if for no $h$ is the assignment allowable,
then assign a least new letter (drawn WLOG from the set of positive integers)
to both $\s{x}[i]$ and $\s{x}[j]$;
\item[(S6)]
since it may be that after Steps (S1)-(S5) have been executed
for every $(i,j) \in E^+$, there still remain unassigned positions in \s{x}
(that is, corresponding to isolated vertices in $\mathcal{P}^+$),
a final assignment of a least possible letter for these positions
is required (see function $\LEAST(i,\lambda_{\max})$ and Lemma~\ref{lemm-unassigned}).
\end{description}

In order to implement this algorithm, several data structures need
to be created, maintained, and accessed:

\begin{description}
\item[(DS1)]
The edges of $E^+$ are made accessible in increasing order for Step (S1) by
a radix sort of the positive edges $(i,j)$ into a linked list $L^+$ (i.e., the linked list $L^+$ contains the edges of $E^+$ in increasing order),
whose entries occur in increasing order of $i$ and, within each $i$,
in increasing order of $j$.
The time requirement is $\Theta(|E^+|)$, thus $O(n^2)$ in the worst case,
since $E^+$ can contain $\Theta(n^2)$ edges \cite{CRSW13}.
\item[(DS2)]
In order to implement Step (S4) of Algorithm RevEng, we need,
for each position $i \in 1..n$, to have available a linked list of positions
$j$ such that $(i,j)$ is an edge of $E^-$.
This can be done by using $E^-$ to form a set of negative edges that includes
each $(i,j)$ twice, both as $(i,j)$ and as $(j,i)$.
Then in a preprocessing step the entries in this set are radix sorted
into an array $N^- = N^-[1..n]$ of $n$ linked lists, such that for every $i \in 1..n$,
$N^-[i]$ gives in increasing order all the vertices $j$
for which $(i,j) \in E^-$
(in other words, the neighbourhood of $i$ in $E^-$).
Since $E^-$ contains $O(n)$ edges \cite{CRSW13},
this preprocessing step can be accomplished in $O(n)$ time.
\item[(DS3)]
Steps (S2)-(S5) require that for each $i \in 1..n$,
a linked list $\s{x}[i]$ be maintained of letters $\lambda$
that have so far been assigned to $\s{x}[i]$.
Each list is maintained in increasing letter order,
so that update, intersection, and union each require $O(\sigma)$ time,
where $\sigma = |\Sigma|$ is the (current) alphabet size.
Since for regular strings each $\s{x}[i]$ has exactly one element,
in this case processing time reduces to $O(1)$.
\item[(DS4)]
In Step (S4), in order to determine whether a proposed assignment
of a letter $\lambda_h$ to a position $i'_h$ in \s{x} is allowable or not,
we form a ``forbidden'' matrix $F[1..n,1..\sigma]$ in which
$F[i,\lambda] = 1$ if and only if $\lambda \in \s{x}[i]$ is forbidden.
$F$ is updated and used as follows:
\begin{itemize}
\item
for each new letter $\lambda_{\max}$ introduced in Step (S5),
$F[i,\lambda_{\max}]$ is initialized to zero for all $i \in 1..n$;
\item
whenever an assignment $\s{x}[i] \stackrel{+}{\la} \lambda$
is made in Steps (S4) \& (S5), set
$F[j,\lambda] \la 1$ for every $j \in N^-[i]$
(procedure \UPDATE\_F$(i,\lambda)$).
\end{itemize}
\end{description}
Figures~\ref{alg-assign} and \ref{alg-least} give pseudocode
for Algorithm RevEng and function LEAST, respectively.

\begin{figure}[h]
{\leftskip=0.35in\obeylines\sfcode`;=3000
\bproc RevEng $(\cP,\s{x},n)$
$\lambda_{\max} \la 0;\ \s{x} \la \emptyset^n;\ F[1..n, 1..\sigma] \la 0^{n \sigma}$
\bwhile \top$(L^+) \ne \emptyset$ \bdo
\qq $(i,j) \la$ \pop$(L^+);\ C \la \emptyset$ \com{$i < j;\ ni\+ j$ a minimum}
\qq \bif $\s{x}[i] \cap \s{x}[j] = \emptyset$ \bthen
\qq\qq $\forall \lambda \in \s{x}[i]$ \bdo $ C_1 \stackrel{+}{\la} (\lambda,j)$ \com{ordered by $\lambda$}
\qq\qq $\forall \lambda \in \s{x}[j]$ \bdo $ C_2 \stackrel{+}{\la} (\lambda,i)$ \com{ordered by $\lambda$}
\com{Merge $C_1$ and $C_2$ into a single sequence ordered by $\lambda$.}
\qq\qq $C \la \MERGE(C_1,C_2)$
\qq\qq $SET \la \false$
\qq\qq \bwhile \top$(C) \ne \emptyset$ \band \bnot $SET$ \bdo
\qq\qq\qq $(\lambda,h) \la$ \pop$(C)$
\qq\qq\qq \bif $F[h,\lambda] \ne 1$ \bthen
\qq\qq\qq\qq $\s{x}[h] \stackrel{+}{\la} \lambda$ \com{maintain $\lambda$ ordering}
\qq\qq\qq\qq $SET \la$ \true;\ \UPDATE\_F$(h,\lambda)$
\qq\qq \bif \bnot $SET$ \bthen
\qq\qq\qq $\lambda_{\max} \la \lambda_{\max}\+ 1$
\qq\qq\qq \bfor $h \la 1$ \bto $n$ \bdo $F[h,\lambda_{\max}] \la 0$
\qq\qq\qq $\s{x}[i] \stackrel{+}{\la} \lambda_{\max};\ $\UPDATE\_F$(i,\lambda_{\max})$
\qq\qq\qq $\s{x}[j] \stackrel{+}{\la} \lambda_{\max};\ $\UPDATE\_F$(j,\lambda_{\max})$
\bfor $i \la 1$ \bto $n$ \bdo
\qq \bif $\s{x}[i] = \emptyset$ \bthen
\com{Identify the least letter $\lambda$ that does \bnot occur}
\com{in {\bf any} $\s{x}[j]$ for which $j \in N^-[i]$.}
\qq\qq $\lambda \la \LEAST(i,\lambda_{\max});\ \lambda_{\max} \la \max(\lambda,\lambda_{\max})$
\qq\qq $\s{x}[i] \la \lambda$
}
\caption{Given the preprocessing outlined in (DS1)-(DS2), Algorithm RevEng computes $\s{x}[1..n]$, the lexicographically least string corresponding to a given prefix (feasible) graph $\cP$ on $n$ vertices.}
\label{alg-assign}
\end{figure}

\begin{figure}[htpb]
{\leftskip=1.5in\obeylines\sfcode`;=3000
\bfunc $\LEAST(i,\lambda_{\max})$
$B[1..\lambda_{\max}] \la 0^{\lambda_{\max}}$
$\forall j \in N^-[i]$ \bdo
\qq $\forall \lambda \in \s{x}[j]$ \bdo
\qq\qq $B[\lambda] \la 1$
$\lambda \la 1$
\bwhile $\lambda \le \lambda_{\max}$ \band $B[\lambda] = 1$ \bdo
\qq $\lambda \la \lambda\+ 1$
}
\caption{Identify the least letter $\lambda$ that does \bnot occur in {\bf any} $\s{x}[j]$ for which $j \in N^-[i]$.}
\label{alg-least}
\end{figure}

\subsection{Correctness}
Consider first the main \bwhile loop of Algorithm ASSIGN,
in which the edges of $E^+$ are considered in strict increasing $(i,j)$ order.
We see that new letters $\lambda_{\max}$ are first introduced
at the leftmost possible positions in $\s{x}$.
Thereafter, whenever a letter is reused ($\lambda_{\max}$ not increased),
it is always the minimum possible letter consistent with the
least possible currently unfilled positions $(i,j)$
that is used --- by virtue of the fact that the entries in $C$
are maintained in increasing order of $\lambda$.
Thus any automorphism of the alphabet $\Sigma$
other than the identity
would yield a larger string.
We conclude that within the main \bwhile loop the assignments
maintain lexicographical order $\prec$
as defined in Section~\ref{sect-prelim}.

It may happen, however, that certain positions $i$ in $\s{x}$ remain empty,
those corresponding to isolated vertices $i$ in $\mathcal{P}^+$.
Assignments to these positions are handled by the final \bfor loop,
which we now consider.
\begin{lemma}
\label{lemm-unassigned}
A vertex $i \in 1..n$ is isolated in $\mathcal{P}^+$ if and only if
\begin{itemize}
\item[(a)]
$\s{y}[i] = 0$ or $i = 1$; \band
\item[(b)]
for every $j \in 2..n$, $\s{y}[j] < i$; \band
\item[(c)]
for every $j \in 1..i\- 1$, $j\+ \s{y}[j] \le i$.
\end{itemize}
\end{lemma}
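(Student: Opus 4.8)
The plan is to read off the isolated vertices directly from the definition of the positive edges (Definition~\ref{defn-P}), classifying the incidences of a fixed vertex $v \in 1..n$ according to the \emph{role} $v$ plays in an edge. Every positive edge has the form $(h, i\+ h\- 1)$ with $i \in 2..n$ and $h \in 1..\s{y}[i]$; since $i \ge 2$ we have $h < i\+ h\- 1$, so each such edge has a well-defined smaller endpoint $h$ and larger endpoint $i\+ h\- 1$, and in particular no positive edge is a loop. Consequently $v$ is non-isolated if and only if $v$ occurs as the smaller endpoint of some positive edge or as the larger endpoint of some positive edge, and I would analyse these two alternatives separately, matching their negations against conditions (a)--(c).

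First I would settle the smaller-endpoint case. Putting $h = v$, an edge $(v, i\+ v\- 1)$ exists exactly when $v \le \s{y}[i]$ for some $i \in 2..n$ (and then its larger endpoint $i\+ v\- 1 \le i\+ \s{y}[i]\- 1 \le n$ is legitimate by feasibility). Hence $v$ is never a smaller endpoint if and only if $\s{y}[i] < v$ for every $i \in 2..n$, which is precisely condition (b).

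Next I would settle the larger-endpoint case. Imposing $i\+ h\- 1 = v$ forces $h = v\- i\+ 1$, and the requirements $h \in 1..\s{y}[i]$ and $i \in 2..n$ then amount to $2 \le i \le v$ together with $i\+ \s{y}[i] > v$. Thus $v$ is never a larger endpoint if and only if $i\+ \s{y}[i] \le v$ for every $i \in 2..v$. Splitting off the term $i = v$, which reads $\s{y}[v] = 0$, gives condition (a): for $v \ge 2$ the disjunct $i=1$ is irrelevant and (a) reduces to $\s{y}[v]=0$, while for $v = 1$ the range $2..v$ is empty, the disjunct $i=1$ makes (a) hold automatically, and indeed vertex $1$ can never be a larger endpoint. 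The remaining terms $i\+ \s{y}[i] \le v$ for $i \in 2..v\- 1$ are exactly condition (c).

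Finally I would assemble these equivalences: $v$ is isolated if and only if it is neither a smaller nor a larger endpoint, i.e. if and only if (b) holds and both (a) and (c) hold, which is the claim. I do not expect any substantial difficulty here; the only care needed is the bookkeeping at the boundaries --- justifying the loop-free observation from $i \ge 2$, treating the single vertex $v = 1$ (for which (a) and (c) degenerate correctly because $1$ is never a larger endpoint), and checking that the $i = v$ contribution to the larger-endpoint condition is captured precisely by (a) rather than being absorbed into (c).
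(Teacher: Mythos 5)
Your proof is correct in substance and follows essentially the same route as the paper: both arguments reduce isolation in $\mathcal{P}^+$ to the three ways a vertex can be incident to a positive edge, with (b) governing incidence as the smaller endpoint and (a), (c) governing incidence as the larger endpoint. Where you differ is in rigour rather than strategy: the paper argues by contradiction through matchings in a realizing string \s{x} (e.g.\ $\s{x}[1..r] \match \s{x}[j..j\+ r\- 1]$) and disposes of the converse with the single sentence that (b) and (c) ``ensure that position $i$ is not contained in any matching range,'' whereas you work directly and biconditionally from Definition~\ref{defn-P}. This is cleaner --- the graph is defined from \s{y} alone, so the detour through a realizing string is unnecessary --- and it turns the sufficiency direction into an honest computation instead of an assertion.

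One point deserves flagging. Your larger-endpoint analysis yields ``$j\+ \s{y}[j] \le v$ for every $j \in 2..v\- 1$,'' and you then declare this to be ``exactly condition (c).'' It is not: (c) as stated ranges over $j \in 1..i\- 1$, and the $j = 1$ term reads $1\+ \s{y}[1] = 1\+ n \le i$, which is unsatisfiable for $i \le n$. Read literally, the lemma is therefore false --- in the paper's own worked example $\s{y} = 50210$, vertex $5$ is isolated yet fails (c) at $j = 1$ --- and the paper's proof has the matching defect: for $j = 1$ its ``contradiction'' edge $(i\- j\+ 1, i)$ degenerates to the loop $(i,i)$, which is not an edge of the simple graph $\mathcal{P}^+$. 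So your derivation is the correct one and silently repairs the statement; the only flaw in your write-up is asserting exact agreement with (c) instead of noting explicitly that (c) must be restricted to $j \in 2..i\- 1$ (equivalently, that the $j = 1$ term should be discarded, since positive edges arise only from indices $i \in 2..n$).
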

\begin{proof}
First suppose that $i$ is isolated.
Then (a) must hold; otherwise, for $i > 1$ and $\s{y}[i] > 0$,
there exists an edge $(1,i) \in E^+$, a contradiction.
If (b) does not hold, there exists $j > 1$ such that
$\s{y}[j] = r \ge i$, implying $\s{x}[1..r] \match \s{x}[j..j\+ r\- 1]$,
hence $\s{x}[i] \match \s{x}[j\+ i\- 1]$,
so that $(i,j\+ i\- 1) \in E^+$, again a contradiction.
Similarly, if (c) does not hold, there exists $j \in 1..i\- 1$
such that $\s{y}[j] = r$ with $j\+ r > i$.
Consequently, $\s{x}[j..j\+ r\- 1] \match \s{x}[1..r]$
implying $\s{x}[i] \match \s{x}[i\- j\+ 1]$,
so that $(i\- j\+ 1,i) \in E^+$, a contradiction that establishes sufficiency.

\medskip\noindent
Suppose then that conditions (a)-(c) all hold.
If we assume that $i = 1$ in (a), then (b) implies that
for every $j \in 2..n$, $\s{y}[j] = 0$, so that $E^+ = \emptyset$
and so every position $i$ is isolated.
Otherwise, if $\s{y}[i] = 0$ for some $i > 1$,
conditions (b) and (c) ensure that position $i$ is not contained
in any matching range within \s{x} and is therefore isolated in $\mathcal{P}^+$,
as required. \qed
\end{proof}

Since in the main \bwhile loop
new maximum letters $\lambda_{\max}$ are introduced into pairs
$i$ and $j > i$ of positions in \s{x} that are
determined by entries in \s{y},
it is an immediate consequence of Lemma~\ref{lemm-unassigned}
that $i$
must be less than any isolated vertex,
in particular the smallest one, $i_{\min}$, say.
In other words, every letter assigned during the execution of the main \bwhile loop
occurs at least once to the left of $i_{\min}$ in \s{x}.
It follows that lexicographical order will be maintained
if any required additional letters $\lambda_{\max}\+ 1, \lambda_{\max}\+ 2, \cdots$
are assigned to an ascending sequence of positions in \s{x}
determined by the isolated vertices in $\mathcal{P}^+$.
We have
\begin{lemma}
Given a prefix graph $\mathcal{P}_{\s{y}}$ corresponding to a feasible array $\s{y}$, Algorithm RevEng constructs a lexicographically least indeterminate string
on a minimum alphabet whose prefix table $\s{\pi} = \s{y}$.
\end{lemma}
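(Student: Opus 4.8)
The plan is to verify three separate claims about the string $\s{x}$ returned by Algorithm RevEng: (i) that $\s{\pi}_{\s{x}} = \s{y}$; (ii) that $\s{x}$ is lexicographically least among all strings with this prefix table; and (iii) that the alphabet $\Sigma$ it uses has minimum size. For (i) the natural route is Lemma~\ref{lemm-easy}: it suffices to show that for every $i \in 1..n$ both the matching condition (a) and the non-matching condition (b) hold, and by Definition~\ref{defn-P} these are encoded exactly by the positive and negative edges of $\cP_{\s{y}}$.

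First I would establish correctness. Condition (a) requires $\s{x}[h] \match \s{x}[i\+ h\- 1]$ for every positive edge $(h,i\+ h\- 1)$; this is precisely what the main \bwhile loop enforces, since for each positive edge $(i,j)$ it either finds $\s{x}[i] \cap \s{x}[j] \ne \emptyset$ already (S2) or makes an assignment placing a common letter in both endpoints (S4/S5). The loop never fails to cover an edge, because step (S5) always has a brand-new letter available and a fresh letter is trivially allowable. Condition (b) requires that the endpoints of every negative edge do not match, so I would prove and then invoke the invariant that, throughout execution, $\s{x}[p] \cap \s{x}[q] = \emptyset$ for every negative edge $(p,q)$. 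This holds because an assignment of $\lambda$ to a position $h$ in (S4) is performed only when $F[h,\lambda] \ne 1$, so no $E^-$-neighbour of $h$ yet holds $\lambda$; the fresh letter used in (S5) likewise satisfies the invariant since $F[\cdot,\lambda_{\max}]$ is reset to $0$; and in both cases \UPDATE\_F$(h,\lambda)$ then sets $F[j,\lambda]\la 1$ for all $j \in N^-[h]$, blocking $\lambda$ at every negative neighbour. For the isolated vertices handled by the final \bfor loop the same guarantee is supplied by \LEAST, which returns the least letter absent from every $\s{x}[j]$ with $j \in N^-[i]$; since $\s{y}[i]=0$ forces the negative edge $(1,i)$, this keeps $\s{x}[1] \not\match \s{x}[i]$.

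Second I would establish lexicographic leastness, building on the exchange argument already sketched before the statement. The key facts are that a new maximal letter is always introduced at the leftmost still-unsatisfied positive edge, and that whenever a letter can be reused the smallest candidate in the merged sequence $C$ is chosen; an induction on positions then shows that any competing string $\s{x}'$ first differs from $\s{x}$ by using a letter no smaller than the algorithm's choice, so $\s{x} \prec \s{x}'$ in the sense of Definition~\ref{wordorder}. The isolated positions do not disturb this, because by Lemma~\ref{lemm-unassigned} every vertex touched in the main loop lies strictly to the left of the smallest isolated vertex $i_{\min}$; hence every letter used so far already occurs to the left of $i_{\min}$, and assigning the required additional letters to the isolated positions in increasing order of position is forced and lex-minimal.

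The main obstacle is claim (iii). The clean combinatorial reformulation is that each letter $\lambda$ corresponds to the set $S_\lambda = \{i : \lambda \in \s{x}[i]\}$, which by condition (b) must be independent in $\cP^-_{\s{y}}$, while by condition (a) the family $\{S_\lambda\}$ must contain both endpoints of every positive edge in a common member; minimizing $|\Sigma|$ is thus the problem of covering $E^+$ by $\cP^-$-independent sets, which already when $\cP^-_{\s{y}}$ is empty degenerates to edge clique cover of $\cP^+_{\s{y}}$ and is NP-hard for arbitrary graphs --- exactly the difficulty flagged for \cite{BBD14}. Proving the greedy letter-count optimal must therefore exploit special structure of prefix graphs rather than argue generically. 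My plan would be to show that each new letter is not merely locally forced but globally unavoidable: at the moment (S5) creates $\lambda_{\max}$ for an edge $(i,j)$, every earlier letter is excluded at both $i$ and $j$ by a negative edge, and I would try to trace these exclusions back to a witnessing family of pairwise ``conflicting'' positive edges --- no two of which can share a letter --- of size $\lambda_{\max}$, yielding a lower bound matching the number of letters used. Making this lower bound rigorous, i.e. showing that the left-to-right processing order together with the prefix-graph structure forecloses any globally cheaper reuse, is where the real work lies and is the step I expect to consume most of the effort.
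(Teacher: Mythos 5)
Your plan for claims (i) and (ii) is sound and in fact more rigorous than what the paper itself provides: the paper offers no formal proof of this lemma at all, only the informal discussion preceding it (new letters appear at the leftmost possible positions; reused letters are chosen minimal via the ordered list $C$; Lemma~\ref{lemm-unassigned} places every isolated vertex of $\mathcal{P}^+$ to the right of all positions touched in the main loop, so the trailing assignments via \texttt{LEAST} preserve leastness). Your verification of $\s{\pi}_{\s{x}} = \s{y}$ through Lemma~\ref{lemm-easy}, with the explicit invariant that the endpoints of every negative edge keep disjoint letter sets (enforced by $F$ and \texttt{UPDATE\_F} during the main loop, and by \texttt{LEAST} at isolated vertices), is exactly the right formalization and is left implicit in the paper. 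One caution for your induction in (ii): because Definition~\ref{letterorder} compares sorted letter sets elementwise, adding a letter to the \emph{earlier} endpoint $i$ can make $\s{x}[i]$ either smaller or larger in the order (e.g.\ $\{a,b\} \prec \{b\}$ but $\{b\} \prec \{b,c\}$), so the rule ``take the smallest $\lambda$ in $C$ regardless of which endpoint it lands on'' requires a case analysis that neither you nor the paper spells out.

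The genuine gap is claim (iii), and you have located it precisely --- but you should know that the paper does not close it either: minimality of the alphabet is simply asserted, with no witness argument anywhere in the text. Your reformulation (each letter $\lambda$ yields a set $S_\lambda$ that must be independent in $\cP^-_{\s{y}}$, and the family $\{S_\lambda\}$ must cover $E^+$, an edge-clique-cover-like problem that is NP-hard generically) is exactly the perspective of \cite{BBD14}, and your instinct that only special prefix-graph structure could rescue greedy optimality is correct. In fact the difficulty is sharper than you state: when step (S5) mints a new letter, the algorithm has only tested letters already present in $\s{x}[i] \cup \s{x}[j]$; a letter occurring elsewhere in $\s{x}$, forbidden at neither $i$ nor $j$, could in principle be added to \emph{both} endpoints, covering the edge without growing the alphabet --- and, being smaller than $\lambda_{\max}\+ 1$, it would also yield a lexicographically smaller set at position $i$. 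Nothing in the algorithm considers this option and nothing in the paper rules it out. So the ``witnessing family of pairwise conflicting positive edges'' you propose is not merely the hard remaining step of your proof; it is missing from the published argument as well, and until it (or a counterexample to the scenario above) is produced, the minimum-alphabet clause of the lemma should be regarded as unproven by both your attempt and the paper.
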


\subsection{Asymptotic Complexity}
The main \bwhile loop in Algorithm RevEng will be executed exactly $|E^+|$
times.
Within the loop the construction of $C$
requires time proportional to $|C| = |\s{x}[i]|\+ |\s{x}[j]|$,
thus $O(\sigma)$ in the worst case.
The processing of $C$ then also requires $O(\sigma)$ time
in the worst case, except for the time required for {\tt UPDATE\_F}.
Each of the three calls of {\tt UPDATE\_F}
corresponds to the assignment of a letter $\lambda$ to a vertex $i$ of $\mathcal{P}^-$
and the ensuing update of $F[i,\lambda]$,
an event that can occur at most $\sigma$ times for each edge in $E^-$,
thus at most $(\sigma \times |E^-|)$ times overall.
Similarly, the \bfor loop that initializes the $F$ array
requires $\Theta(n)$ time for each of at most $\sigma$ values of $\lambda_{\max}$.

We conclude that the worst-case time requirement for the \bwhile loop
is $O\big(\sigma\max(|E^+|,|E^-|)\big)$.
As illustrated by the examples
$\s{y} = n0^{n-1}$ and $\s{y} = n|n\- 1|\ldots|1$,
the bounds on these quantities are as follows:
$0 \le |E^+|\le \binom{n}{2}$ and $0 \le |E^-| \le n\- 1$,
while $|E^+|\+ |E^-| \ge n\- 1$.
As noted earlier, it was shown in \cite{BBD14} that
$\sigma \le n\+ \sqrt{n}$, with a further conjecture that
$\sigma \le n$.

Turning our attention to the terminating \bfor loop of Algorithm ASSIGN,
we observe that for at most $n$ executions of function {\tt LEAST}, the binary array $B$
of length at most $\sigma$ must be created,
thus overall consuming $O(\sigma n)$ time.
The nested $\forall$ loops in {\tt LEAST} set positions in $B$ $\lambda$ times
for at most every edge in $E^-$, again requiring at most
$O(\sigma n)$ time over all invocations of {\tt LEAST}.
Thus
\begin{lemma}
Algorithm RevEng requires $O(\sigma n^2)$ time in the worst case,
where $\sigma \le n\+ \sqrt{n}$.
\end{lemma}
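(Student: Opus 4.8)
The plan is to bound each of the three phases of Algorithm RevEng in turn --- the preprocessing that builds (DS1)--(DS4), the main \textbf{while} loop over $E^+$, and the terminating \textbf{for} loop over isolated vertices --- and then to combine the costs using the crude bounds $|E^+| \le \binom{n}{2}$ and $|E^-| \le n\- 1$ already noted above. The preprocessing is quickly disposed of: the radix sort of the positive edges into $L^+$ (DS1) costs $\Theta(|E^+|) = O(n^2)$, the radix sort building the adjacency lists $N^-$ (DS2) costs $O(n)$ because $|E^-| = O(n)$, and allocating the matrix $F[1..n,1..\sigma]$ costs $O(\sigma n)$; all of this lies within $O(\sigma n^2)$.

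Next I would analyse the main \textbf{while} loop, which executes exactly $|E^+|$ times. Per iteration, merging and scanning $C$ costs $O(|\s{x}[i]|\+ |\s{x}[j]|) = O(\sigma)$, for an aggregate of $O(\sigma|E^+|)$. The delicate point --- the step I expect to be the main obstacle --- is to charge the calls of {\tt UPDATE\_F} \emph{amortized} over $E^-$ rather than once per iteration. The key observation is that a call {\tt UPDATE\_F}$(i,\lambda)$ writes $|N^-[i]|$ entries of $F$ and is triggered exactly once for each distinct letter $\lambda$ that is ever placed in $\s{x}[i]$; since a letter is never removed, and since Step (S2) guarantees that no candidate letter already belongs to its target position, no (position, letter) pair is processed twice. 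Hence the total {\tt UPDATE\_F} work is $\sum_i |\s{x}[i]|\cdot|N^-[i]| \le \sigma\sum_i|N^-[i]| = 2\sigma|E^-| = O(\sigma|E^-|)$. Adding the $F$-reinitialization loop, which runs $\Theta(n)$ time for each of at most $\sigma$ new values of $\lambda_{\max}$, the loop as a whole costs $O\big(\sigma\max(|E^+|,|E^-|)\big) + O(\sigma n)$.

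I would then treat the terminating \textbf{for} loop. It calls $\LEAST$ at most $n$ times; each call allocates and scans $B[1..\lambda_{\max}]$ in $O(\sigma)$ time (using $\lambda_{\max}\le\sigma$), contributing $O(\sigma n)$ overall. The nested scans inside $\LEAST$, over $N^-[i]$ and then over each $\s{x}[j]$, charge at most $\sigma$ to each negative edge incident to an isolated vertex, so again cost $O(\sigma|E^-|) = O(\sigma n)$ in total.

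Finally I would sum the three phases and substitute $|E^+| = O(n^2)$ and $|E^-| = O(n)$, so that $\max(|E^+|,|E^-|) = O(n^2)$ and every term is absorbed into $O(\sigma n^2)$. The bound $\sigma \le n\+\sqrt n$ is not proved here but imported directly from \cite{BBD14}. The only genuine subtlety throughout is the amortized accounting of {\tt UPDATE\_F} and of the $\LEAST$ scans against $\sigma|E^-|$; everything else is a routine summation of per-iteration costs.
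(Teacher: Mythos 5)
Your proposal is correct and follows essentially the same route as the paper's own analysis: the same decomposition into the $|E^+|$-iteration \textbf{while} loop with $O(\sigma)$ work per edge for building and scanning $C$, the same amortized charging of {\tt UPDATE\_F} against $O(\sigma|E^-|)$ and of the $F$-initialization against $\Theta(n)$ per each of at most $\sigma$ values of $\lambda_{\max}$, the same $O(\sigma n)$ accounting for {\tt LEAST} in the final \textbf{for} loop, and the same importation of $\sigma \le n\+\sqrt{n}$ from \cite{BBD14}. Your extra detail --- the preprocessing bounds and the observation that no (position, letter) pair triggers {\tt UPDATE\_F} twice --- merely makes explicit what the paper states as ``an event that can occur at most $\sigma$ times for each edge in $E^-$.''
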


\subsection{Example}
Suppose $\s{y} = 50210$, so that
$E^+ = 13,14,24$ and $E^- = 12,15,25,35$.
\begin{itemize}
\item
In $E^+$ first consider edge $13$,
leading to assignments $\s{x}[1] \la a,\ \s{x}[3] \la a$,
with $F[2,a] = F[5,a] = 1$ since both $12$ and $15$ are edges of $E^-$.
\item
Edge $14$ of $E^+$ leads to $\s{x}[4] \la a$ and no new values in $F$,
since vertex $4$ is isolated in $E^-$.
\item
Edge $24$ of $E^+$ requires a new letter because $F[2,a] = 1$.
Therefore we assign $\s{x}[2] \la b$ and $\s{x}[4] \stackrel{+}{\la} b$,
while setting $F[1,b] = F[5,b] = 1$ because of the edges
$21$ and $25$ in $E^-$.
\item
Finally we deal with the isolated vertex $5$ in $E^+$
by setting $\s{x}[5] \la c$ since $15 \in E^-$ and $\s{x}[1] = a$,
while $25 \in E^-$ and $\s{x}[2] = b$.
\end{itemize}
The lexicographically least string
is $\s{x} = a b a \{a,b\} c$.

\subsection{Computational Experiments}
To get an idea of how the algorithm behaves in practice, we have implemented
Algorithm RevEng and conducted a simple experimental study. A set of 1000
feasible arrays having lengths $10, 20, .., 100$ has been randomly generated as
follows. For each feasible array $\s{y}$ we randomly select a value for
$\s{y}[i], i \in [1..n]$ from within the range $[0..n-i+1]$. The experiments
have been run on a Windows Server 2008 R2 64-bit Operating System, with
Intel(R) Core(TM) i7 2600 processor @ 3.40GHz having an installed memory (RAM)
of 8.00 GB. We have implemented Algorithm RevEng in $C\#$ language using Visual
Studio 2010. As is evident from Figure~\ref{fig-tests}, the experiments suggest
that average case time also increases by a factor somewhat greater than $n^2$.
\begin{figure}[t!]
  \centering
  \includegraphics*[scale=1]{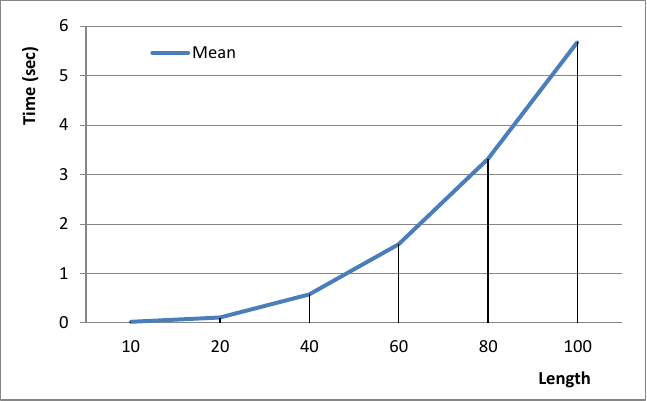}
  \caption{Timing results for randomly-generated feasible arrays
  \s{y}.}
  \label{fig-tests}
\end{figure}
\section{Discussion}
\label{sect-future}
The high worst-case time complexity of the algorithm described here
suggests room for improvement. 
On the other hand, it is difficult to imagine an algorithm
that could do the same computation without considering all the edges
of $E^+$ and thus necessitating $\Theta(n^2)$ time for many instances of
the prefix table $\s{\pi}$.
Similarly, the requirement to achieve a lexicographically least solution
leads to a recurring dependence on alphabet size $\sigma$
that expresses itself in the time complexity.
Even though it may be true that $\sigma \le n$,
nevertheless it seems clear that $\sigma$ can be much larger
than in the regular case, where it has been shown \cite{CCR09,CRSW13}
that $\sigma \le \ceil{\log_2 n}$.

We have tried approaches that focus on $E^-$ rather than $E^+$
as the primary data structure, but without success.
In particular, we have considered ``triangles'' $i_1ji_2$,
where both $(i_1,j)$ and $(i_2,j)$ are edges in $E^+$,
while $(i_1,i_2) \in E^-$, a situation that forces a
string to be indeterminate.
It turns out, however, that the number of such triangles is $O(n^2)$.
Similarly, the ingenious graph proposed in \cite{BBD14}, 
whose chromatic number is the minimum alphabet size $\sigma$
of a string corresponding to a given prefix table,
has $O(n^2)$ vertices in the worst case. 

At the same time, we have no proof that our algorithm
is asymptotically optimal; for example, an algorithm that could eliminate
the $\sigma$ factor in the complexity would be of considerable interest.
Also interesting would be an algorithm for indeterminate strings that would
execute in $\Theta(n)$ time on regular strings as a special case, thus matching the algorithm of \cite{CCR09}.
More generally, we propose the study of indeterminate strings
(``strings'' as we have called them here),
their associated data structures (such as the prefix table),
and their applications as a promising research area
in both combinatorics on words and string algorithms.

\bibliographystyle{alpha} 
\bibliography{references.bib}
 
\end{document}